\NeedsTeXFormat{LaTeX2e}
\documentclass[11pt,reqno]{amsart}

\usepackage{hyperref}
\usepackage{latexsym,amsmath, bm}
\usepackage{enumerate}
\usepackage{amsfonts}
\usepackage{amssymb}
\usepackage[a4paper,bottom=2.3cm]{geometry}
\usepackage{latexsym}
\usepackage{fixmath}
\usepackage{faktor}
\usepackage{mathtools}

\usepackage[T1]{fontenc}
\usepackage{fourier}
\usepackage{bbm}
\usepackage{color}

\usepackage{pgfplots}
\usepackage{tikz}
\usetikzlibrary{shapes,decorations,arrows,calc,arrows.meta,fit,positioning}
\usepgfplotslibrary{fillbetween}
\usepackage{graphicx}
\usepackage{fullpage}
\usetikzlibrary{decorations.pathreplacing, matrix}

\usepackage[boxed, linesnumbered, noend, noline]{algorithm2e}
\SetKwInput{KwData}{Input}
\SetKwInput{KwResult}{Output}

\numberwithin{equation}{section}

\renewcommand\log{\ln}

\newcommand\SPIV{{\tt SPIV}}

\newcommand\aspiv{{\tt ASPIV}}

\renewcommand{\epsilon}{\eps}

\renewcommand{\vec}[1]{\boldsymbol{#1}}

\newcommand\SIGMA{\vec\sigma}

\newtheorem{definition}{Definition}[section]

\newtheorem{theorem}[definition]{Theorem}
\newtheorem{lemma}[definition]{Lemma}
\newtheorem{proposition}[definition]{Proposition}
\newtheorem{corollary}[definition]{Corollary}

\newcommand\eps{\varepsilon}

\newcommand\Erw{\mathbb{E}}

\newcommand{\set}[1]{\left\{#1\right\}}

\newcommand\bc[1]{\left({#1}\right)}
\newcommand\cbc[1]{\left\{{#1}\right\}}

\newcommand\abs[1]{\left|{#1}\right|}

\newcommand{\Whp}{W.h.p.}
\newcommand{\whp}{w.h.p.}

\newcommand\Lem{Lemma}

\newcommand\Thm{Theorem}

\newcommand\Sec{Section}

 \def\G{{\vec G}}

\newcommand{\remove}[1]{}

\newcommand{\one}{V_1}
\newcommand{\zero}{V_0}
\newcommand{\zerominus}{V_{0-}}
\newcommand{\zeroplus}{V_{0+}}
\newcommand{\oneplus}{V_{1+}}
\newcommand{\oneminus}{V_{1-}}

\newcommand\minf{m_{\mathrm{inf}}}

\newcommand\mSC{m_{\mathrm{SPIV}}}
\newcommand\mnonadap{m_{\mathrm{one-stage}}}

\newcommand{\be}{\begin{equation}}
	\newcommand{\bel}[1]{\begin{equation}\lab{#1}\ }
		\newcommand{\ee}{\end{equation}}
	\newcommand{\bea}{\begin{eqnarray}}
		\newcommand{\eea}{\end{eqnarray}}
	\newcommand{\bean}{\begin{eqnarray*}}
		\newcommand{\eean}{\end{eqnarray*}}

\newcommand\FigG{

\begin{figure}
\begin{minipage}[t][][b]{0.95 \textwidth}
\begin{tikzpicture}[scale=0.8]
\begin{axis}[
axis lines = left,
 xlabel = density parameter $\theta$,
ylabel style={at={(0,0.3)}},
xtick={0, 0.409, 0.5, 1},
xticklabels={$0$, $\frac{\log2}{1+\log2}$, $\frac{1}{2}$, $1$},
ytick={2.081368981, 1.442695041, 0.8520789316},
yticklabels={$\log^{-2}2$,$\log^{-1}2$, $((1+\log2)\log 2)^{-1}$},
ymin = 0,
ymax = 2.2,
xmin = 0,
xmax=1.05,
height=7cm,
width=15cm,
legend style={at={(axis cs:0.5,1.55)},anchor=south}
]

\addplot [
name path=it,
domain=0:1, 
samples=100, 
dotted,
style={ultra thick},
color=red,
]
{max(x/(ln(2)^2), (1-x)/ ln(2)};
\addlegendentry{$m_{\text{one-stage}}/(n^\theta\log n)$}

\addplot [
name path=counting,
domain=0:1, 
samples=100, 
color=blue,
style={ultra thick},
loosely dotted
]
{(1-x)/(ln(2))};
\addlegendentry{$m_{\text{inf}}/(n^\theta\log n)$}

\addplot [
name path=special_designs,
domain=0:1, 
samples=100, 
color=black!80,
style={thick},
loosely dotted
]
{(1-x)/(ln(2)*ln(2))};

\addplot [
        thick,
        color=black,
        fill=red, 
        fill opacity=0.1
    ]
    fill between[
        of=it and counting,
        soft clip={domain=0.409:1},
    ];

\path[name path=axis] (axis cs:0,0) -- (axis cs:1,0);

\addplot+[
mark=none,
color=black,
dotted,
]
coordinates
{(0.409,0) (0.409,0.8520789316)};

\addplot+[
mark=none,
color=black,
dotted,
]
coordinates
{(0,0.8520789316) (0.409,0.8520789316)};

\end{axis}

\end{tikzpicture}
\end{minipage}

\caption{The phase transitions in group testing. The dotted red line shows the information-theoretic and algorithmic phase transition $\mnonadap{}$ for any one-stage test design. Below this line finding the infected individuals is impossible in one-stage group testing, while above the line the efficient inference algorithm \SPIV~exists \cite{Coja_2019_SPIV}. The dotted blue line shows the information-theoretic and algorithmic phase transition $\minf$ for two-stage group testing. The dotted gray line indicates the best previously-known two-stage upper bound \cite{Mezard_2011}. The red area charts the adaptivity gap, that is the regime in which two-stage group testing is superior to one-stage group testing.} \label{fig_bounds_illustration}
\end{figure}

}

\begin{document}
	
\title{Optimal adaptive group testing}
	
\thanks{Supported by DFG CO 646/3 and Stiftung Polytechnische Gesellschaft.}

\author{Max Hahn-Klimroth, Philipp Loick}
\address{Max Hahn-Klimroth, {\tt hahnklim@math.uni-frankfurt.de}, Goethe University, Mathematics Institute, 10 Robert Mayer St, Frankfurt 60325, Germany.}	
\address{Philipp Loick, {\tt loick@math.uni-frankfurt.de}, Goethe University, Mathematics Institute, 10 Robert Mayer St, Frankfurt 60325, Germany.}
	
\begin{abstract}
The group testing problem is concerned with identifying a small number $k \sim n^\theta$ for $\theta \in (0,1)$ of infected individuals in a large population of size $n$. At our disposal is a testing procedure that allows us to test groups of individuals. This paper considers two-stage designs where the test results of the first stage can inform the design of the second stage. We are interested in the minimum number of tests to recover the set of infected individuals w.h.p. Equipped with a novel algorithm for one-stage group testing from [Coja-Oghlan, Gebhard, Hahn-Klimroth \& Loick 2019] and a similar procedure as [Scarlett 2018], we propose a polynomial-time two-stage algorithm that matches the universal information-theoretic lower bound of group testing. This result improves on results from [M\'ezard \& Toninelli 2011] and [Scarlett 2018] and resolves open problems posed in [Aldridge, Johnson \& Scarlett 2019, Berger \& Levenshtein 2002, Damaschke \& Muhammad 2012].
\end{abstract}

\maketitle

\section{Introduction}
\subsection{Background and motivation}
The roots of the group testing problem can be traced back to work of Dorfman in 1943 \cite{Dorfman_1943}. One aims to identify a small number of infected individuals in a large population. At our disposal is a testing procedure that allows us to test groups of individuals. A test will return \textit{positive}, if at least one individual in a tested group is infected, and \textit{negative} otherwise. The merit of this pooled testing procedure is that significantly fewer tests need to be performed than with individual testing. The group testing problem asks for the minimum number of tests under a carefully chosen test design such that reliable recovery of each individual's infection status is possible.

To be precise, let $n$ be the number of individuals and $k \sim n^\theta$ for $\theta \in (0,1)$ be the number of infected individuals. We are interested in the minimum number of tests $m=m(n,\theta)$ from which we can identify the infected individuals. Clearly, the number of possible test results $2^m$ must exceed the $\binom{n}{k}$ possible configurations with $k$ infected individuals. This folklore arguments yields the lower bound 
\begin{align}
    \minf \gtrsim \frac{1-\theta}{\log 2} n^\theta \log n. \label{eq_lower_bound}
\end{align}
In his original contribution, Dorfman proceeded by assigning each individual to exactly one test with each test containing $\sim \sqrt{n/k}$ individuals. If the test result was negative, all individuals in this test could be declared healthy and no further tests were required. Conversely, when a test returned positive, all individuals in this test were subsequently tested individually. In expectation, this test design needed a total of (see \cite{Mezard_2008_2} for the derivation)
\begin{align*}
    \Erw[m_{\text{Dorfman}}] = 2 \sqrt{kn} = 2 n^{(1+\theta)/2}
\end{align*}
tests. Thus, while Dorfman was able to reduce the expected number of tests by a factor of $\Theta(\sqrt{k/n})$ compared to individual testing, the design remained far from optimal. Whether there exists an efficient two-stage algorithm that attains the universal information-theoretic lower bound has been an open problem until today. Particularly over the last years, partial progress has been made in this domain. Scarlett \cite{Scarlett_2019} proposed an efficient algorithm that matches the information-theoretic lower bound, but proceeds in a total of three rather than two stages. Similar results were obtained in \cite{Damaschke_2012}. M\'ezard \& Toninelli \cite{Mezard_2011} showed that a specific class of algorithms fails to recover the infected individuals under any two-stage test design for less than $\minf \log^{-1}2$ tests. In \cite{Scarlett_2018}, Scarlett showed that the lower bound in \eqref{eq_lower_bound} is information-theoretically achievable in two stages.
Finding a \textit{polynomial-time} algorithm that attains the information-theoretic lower bound is still an open research problem \cite{Aldridge_2019, Berger_2002, Damaschke_2012}.
In this paper, we completely resolve this problem by providing a polynomial-time two-stage algorithm that matches the universal lower bound for all $\theta \in (0,1)$.

\FigG{}

The main ingredient to this algorithm is a recent result from \cite{Coja_2019_SPIV} that puts forward an efficient algorithm for \textit{non-adaptive} group testing. Non-adaptive group testing is the one-stage variant of the group testing problem where all tests are performed in parallel and the information from prior tests does not inform the design of subsequent tests. The proposed algorithm called \SPIV~ is inspired by the notion of spatial coupling that has already proved powerful in coding theory \cite{Felstrom_1999, Kudekar_2011, Kudekar_2013}. In \cite{Coja_2019_SPIV}, \SPIV~ is shown to achieve the universal lower bound for all $\theta \leq \log 2 / (1 + \log 2) \approx 0.41$. For $\theta>\log 2 / (1 + \log 2)$, the authors derive an information-theoretic lower bound for any one-stage test design that falls short of the universal lower bound thereby establishing the presence of an adaptivity gap. Specifically, there is no algorithm (efficient or not), that can solve the group testing problem in one stage with less than
$$\mnonadap = \max \cbc{ \frac{1 - \theta}{\log 2}, \frac{\theta}{\log^2 2} } n^\theta \log(n) $$ 
tests. {\tt SPIV} reaches this one-stage information-theoretic lower bound for all $\theta \in (0,1)$, thus it is an optimal one-stage algorithm. Equipped with the result from \cite{Coja_2019_SPIV} and following a similar procedure as \cite{Scarlett_2018}, we find an efficient two-stage algorithm whose bound coincides with the universal lower bound for all $\theta \in (0,1)$. This algorithm will be called \aspiv.  With the main work to derive the \SPIV~algorithm being provided in \cite{Coja_2019_SPIV}, the proofs in this paper are elementary and the key contribution of the present work is the provision of a novel efficient optimal two-stage algorithm. 

\subsection{Results and notation}
Our paper is based on the same assumptions as commonly made in the group testing literature. Their merit will be reviewed in \Sec~\ref{subsec_related_work}. For starters, let $\SIGMA\in\{0,1\}^{\{x_1,\ldots,x_n\}}$ be a vector of Hamming weight $k$ chosen uniformly at random. Throughout this paper, we will refer to $\SIGMA$ as the ground truth. A one-entry in $\SIGMA$ signifies that the corresponding individual is infected. Further, for every stage $i \in \cbc{1,2}$ of the algorithm, let $$\G^{(i)}=\G(n,m^{(i)},(\partial x^{(i)}_j)_{j \in [n]})$$
represent a bipartite factor graph with $n$ vertices \textit{on the left} and $m^{(i)}$ vertices \textit{on the right}. Vertices $x_1, \dots , x_n$ represent the individuals, while $a_1^{(i)}, \dots , a_m^{(i)}$ represent the tests. Two vertices $x$ and $a^{(i)}$ are adjacent, if and only if individual $x$ participates in test $a^{(i)}$. The set of neighbours of vertex $x$ and vertex $a^{(i)}$ under $\G^{(i)}$ will be denoted by $\partial x^{(i)}$ and $\partial a^{(i)}$, respectively. The vertices $x_1,\dots,x_n$ are labeled with values in $0$ and $1$ by $\SIGMA \in \set{0,1}^n$ indicating whether an individual is healthy or infected. A pooling graph $\G$ and $\SIGMA$ induce a vector $\hat\SIGMA(\G)$ indicating the test results. Denote by $\hat\SIGMA^{(i)}\in\{0,1\}^{\{a_1,\ldots,a_m\}}$ the test results w.r.t. $\G^{(i)}$. Specifically, $\hat\SIGMA^{(i)}_a=1$ iff test $a^{(i)}$ is adjacent to an individual $x$ with $\SIGMA_x=1$. We are interested in the minimum number of tests $m$ such that an efficient algorithm can infer $\SIGMA$ from $\hat\SIGMA^{(i)}$ and $\G^{(i)}$ for $i \in \cbc{1,2}$. As we are interested in a two-stage design, we allow $\G^{(2)}$ to depend on $(\G^{(1)}, \hat\SIGMA^{(1)})$. We will further call the tuple of a factor graph $\G$ and a configuration $\SIGMA \in \cbc{0,1}^n$ a \textit{$(\G, \SIGMA)$-group testing instance}. We are in a position to state our main theorem. \footnote{We will write \whp~ for an event that occurs with probability $1-o(1)$.}

\begin{theorem}\label{Thm_aspiv}
Suppose that $0<\theta<1$ and $\eps>0$ and let
\begin{align*}
    \minf = \frac{1-\theta}{\log 2} n^\theta \log n.
\end{align*}
If $m>(1+\eps) \minf$, there exist efficiently constructable test designs $\G^{(1)}, \G^{(2)}$ and a polynomial-time two-stage algorithm that given $\G^{(1)}, \G^{(2)}, \hat\SIGMA^{(1)}, \hat\SIGMA^{(2)}$ outputs $\SIGMA$ \whp
\end{theorem}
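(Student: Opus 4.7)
The plan is to split the proof into two regimes of $\theta$, corresponding to whether the two-stage bound $\minf$ coincides with the one-stage bound $\mnonadap$ or is strictly smaller. If $\theta\le\log 2/(1+\log 2)$, then $\minf=\mnonadap$ and the one-stage result of \cite{Coja_2019_SPIV} already suffices: I would take $\G^{(1)}$ to be the \SPIV\ pool graph on $(1+\eps)\minf$ tests, let $\G^{(2)}$ consist of a single dummy test, and invoke \SPIV\ as the stage-1 inference algorithm to output $\SIGMA$ \whp

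In the harder regime $\theta>\log 2/(1+\log 2)$ we have $\minf<\mnonadap$, so genuine adaptivity is needed. Fix a small $\delta\in(0,\eps)$. In stage~1 I would set $m^{(1)}=(1+\delta)\minf$, let $\G^{(1)}$ be the spatially coupled pool graph of \cite{Coja_2019_SPIV} on $m^{(1)}$ tests, and run a cautious variant of the \SPIV\ inference: each vertex is labelled ``healthy'', ``infected'' or ``uncertain'', where a definite label is assigned only on unambiguous evidence (any $x_i$ appearing in a negative test is declared healthy, and any $x_i$ that is the unique still-uncertain occupant of some positive test is declared infected). Let $\cC$ be the set of vertices remaining uncertain at the end of stage~1. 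Stage~2 is then a trivial individual-testing cleanup: set $\G^{(2)}$ to consist of $|\cC|$ singleton tests, one per vertex of $\cC$, which determines $\SIGMA_x$ for every $x\in\cC$ and, combined with the stage-1 labels, recovers $\SIGMA$ exactly. The total test count is $m^{(1)}+|\cC|$.

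The entire argument therefore rests on the bound $|\cC|=o(n^{\theta}\log n)$ \whp, which I view as the main obstacle. Following a strategy similar to \cite{Scarlett_2018}, my plan is to prove this by a first-moment computation tailored to the structure of $\G^{(1)}$: for a typical healthy vertex, the probability of appearing in no negative test of $\G^{(1)}$ should decay as $n^{-\eta}$ for some $\eta=\eta(\eps,\theta)>0$, since with $(1+\delta)\minf$ tests a positive fraction of tests remain negative and each vertex participates in many pools; dually, for a typical infected vertex, the probability of failing to be uniquely exposed in some positive test should decay similarly. Summing over vertices yields $\Erw|\cC|=o(n^\theta)$, and a bounded-differences (McDiarmid) concentration argument on the independent pool assignments underlying $\G^{(1)}$ upgrades this to the \whp\ bound. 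Since $|\cC|=o(n^\theta)=o(\minf)$, the total test count satisfies $m^{(1)}+|\cC|\le(1+\eps)\minf$ \whp, as required. The delicate point is that \SPIV\ is analysed in \cite{Coja_2019_SPIV} exactly at its one-stage threshold $\mnonadap$, whereas here it is fed strictly fewer tests $(1+\delta)\minf<\mnonadap$; so the partial-recovery guarantee used above is not obtained by direct citation but must be extracted from the density-evolution analysis of \cite{Coja_2019_SPIV}, which is where I expect most of the bookkeeping to go.
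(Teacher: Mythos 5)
Your plan for the hard regime fails at exactly the step you identify as the main obstacle, and it fails in the direction opposite to your hope. A stage-1 rule that assigns definite labels only on unambiguous evidence (healthy if in a negative test, infected via a DD-type certificate) leaves uncertain precisely the disguised individuals $\zeroplus\cup\oneplus$; this is exactly the class of two-stage schemes analysed by M\'ezard and Toninelli, which the paper's concluding section recalls cannot succeed below $\minf/\log 2$. Quantitatively: with $m^{(1)}=(1+\delta)\minf$ and the optimal tuning (each individual in $\Delta\approx m^{(1)}\log 2/k$ tests, each test negative with probability about $1/2$), a healthy individual appears in no negative test with probability roughly $2^{-\Delta}=n^{-(1+\delta)(1-\theta)\log 2}$, so
\begin{align*}
\Erw\abs{\cC}\;\gtrsim\; n^{1-(1+\delta)(1-\theta)\log 2}\;=\;n^{\theta+(1-\theta)\bc{1-(1+\delta)\log 2}},
\end{align*}
which for small $\delta$ exceeds $n^{\theta}\log n$ by a polynomial factor for \emph{every} $\theta\in(0,1)$, since $\log 2<1$. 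Your proposed first-moment computation therefore proves the opposite of $\abs{\cC}=o(n^\theta\log n)$: the decay exponent $\eta=(1+\delta)(1-\theta)\log 2$ is positive but strictly smaller than $1-\theta$, so stage-2 individual testing blows the budget; one would need $(1+\delta)\log 2\geq 1$, i.e.\ $m^{(1)}\geq\minf/\log 2$, exactly the M\'ezard--Toninelli barrier. (A secondary flaw: the rule ``unique still-uncertain occupant of a positive test'' is unsound once some occupant has been declared infected, because that test's positivity is then explained and certifies nothing about the remaining occupant; the sound certificate requires all other occupants to appear in negative tests.)

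The missing idea is that stage 1 must classify the disguised individuals too, which is what the estimation phase of \SPIV\ does via the Bayesian signal (counts of ``unexplained'' positive tests) that certificate-based rules ignore. Moreover, your closing worry is unfounded: the weak-recovery guarantee strictly below $\mnonadap$ does not have to be extracted from the density-evolution analysis, since Corollary 4.16 of \cite{Coja_2019_SPIV} (the paper's \Lem~\ref{fact_two_bounds_spiv}) states directly that the estimation phase alone outputs $\tau$ with $\|\tau-\SIGMA\|_1=kn^{-\Omega(1)}$ \whp\ already at $m\geq(1+\eps)(1-\theta)\log^{-1}(2)\,n^\theta\log n$. Stage 2 must then also handle false \emph{negatives}: the declared-healthy set $V_0(\tau)$ has size close to $n$, so it cannot be tested individually, but it contains at most $k/\log n$ infected individuals \whp; the paper therefore runs a fresh one-stage {\tt COMP} instance on $V_0(\tau)$ with $(1+\eps)\log^{-2}(2)\,k=o(\minf)$ tests, in parallel with individual tests on $V_1(\tau)$, whose size is $k(1+n^{-\Omega(1)})$, for a total of $(1+3\eps)\minf$. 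Your regime split at $\theta=\log 2/(1+\log 2)$ is harmless but unnecessary --- the paper's argument is uniform in $\theta$.
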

\Thm~\ref{Thm_aspiv} immediately implies that the \aspiv~algorithm is an optimal two-stage algorithm since it reaches the universal lower bound \eqref{eq_lower_bound} for all $\theta \in (0,1)$.

\subsection{Related work} \label{subsec_related_work}

In this paper, we focus on two-stage adaptive group testing. Since we reach the universal lower bound \eqref{eq_lower_bound} with our algorithm, there is no need to consider group testing with further stages, since every additional stage is time-consuming and not apt to automation \cite{Chen_2008, Kwang_2006}. Indeed, from an automation perspective, performing all tests in parallel would be the ideal choice. However, a recent result \cite{Coja_2019_SPIV} shows that for $\theta > \frac{\log 2}{1 + \log 2}$, there exists an adaptivity gap resulting in the need of significantly more tests than $\minf$ under any one-stage algorithm. Aldridge \cite{Aldridge_2018} even showed that when $\theta=1$ (aka \textit{linear case}), individual testing is optimal as a one-stage algorithm. Thus, adaptive group testing might be preferable to non-adaptive group testing for large values of $\theta$.

As stated before, we assume throughout the paper that the number of infected individuals $k$ scales sublinearly in the total population $n$, i.e. $k=n^\theta$ for some fixed $\theta \in (0,1)$. This is the setting most commonly considered in the literature \cite{Aldridge_2016,Aldridge_2017, Berger_2002,Coja_2019, Johnson_2019,Mezard_2008,Scarlett_2016}. It also has practical relevance since infections typically spread sublinearly in the total population size as evidenced by Heap's law in epidemiology \cite{Benz_2008} and biological as well as socio-economic effects. For the linear case, it is known that for $k/n > 1-\text{log}_3 2$, individual testing is optimal even when allowing adaptivity \cite{Aldridge_2019}. For $k/n < 1/3$ two-stage algorithms that perform better than individual testing are known \cite{Aldridge_2019_2, Hu_1981, Riccio_2000}.

One further distinction in group testing is between combinatorial and probabilistic group testing. Combinatorial group testing considers all possible $\SIGMA$ and asks for an algorithm that allows to recover $\SIGMA$ in the worst case. Conversely, probabilistic group testing requires the algorithm to recover a configuration $\SIGMA$ chosen uniformly at random \whp~ While combinatorial group testing was the research focus for the decades after the initial proposal by Dorfman, we have witnessed a surge of interest in probabilistic group testing in recent years \cite{Aldridge_2016,Aldridge_2017, Berger_2002,Coja_2019, Johnson_2019,Mezard_2008,Scarlett_2016}, which is also the focus of the paper at hand.

Lastly, the literature often differentiates between combinatorial and iid priors. Under the former, we know that $\SIGMA$ has exactly Hamming weight $k$, while under the latter every individual is infected with probability $k/n$ independently. While the proof of this paper and \cite{Coja_2019_SPIV} pertain to combinatorial priors, Theorem 1.7 of \cite{Aldridge_2019} evinces that our findings immediately transfer to settings with iid priors.

\section{Proof of \Thm~\ref{Thm_aspiv}}
To prove Theorem \ref{Thm_aspiv}, we will need to get a handle on different types of individuals under $\G^{(1)}$. To this end, let
\begin{align*}
    \zero(\SIGMA) = \SIGMA^{-1}(0) \qquad \text{and} \qquad \one(\SIGMA) = \SIGMA^{-1}(1)
\end{align*}
and define
\begin{itemize}
    \item $\zerominus = \zerominus(\G^{(1)},\SIGMA)$: the set of healthy individuals that appear in at least one negative test,
    \item $\zeroplus = \zeroplus(\G^{(1)},\SIGMA)$: the set of healthy individuals that only appear in positive tests, formally
    \begin{align*}
        \zerominus(\G^{(1)}, \SIGMA) = \cbc{x \in \zero(\SIGMA): \min_{a \in \partial_{\G^{(1)}} x} \hat\SIGMA_a^{(1)} = 0 }, \qquad \zeroplus(\G^{(1)}, \SIGMA) = \zero(\SIGMA) \setminus \zerominus(\G^{(1)}, \SIGMA).
    \end{align*}
\end{itemize}
Moreover, let
\begin{itemize}
    \item $\oneminus = \oneminus(\G^{(1)}, \SIGMA)$: the set of infected individuals that appear in at least one test as the only infected individual,
    \item $\oneplus = \oneplus(\G^{(1)}, \SIGMA)$: the set of infected individuals that only appear in tests, that contain at least one other infected individual, formally
    \begin{align*}
        \oneminus(\G^{(1)}, \SIGMA) = \cbc{x \in \one(\SIGMA): \exists a \in \partial_{\G^{(1)}} x: \max_{y \in \partial_{\G^{(1)}} a \setminus x} \SIGMA_y = 0}, \qquad \oneplus(\G^{(1)}, \SIGMA) = \one(\SIGMA) \setminus \oneminus(\G^{(1)}, \SIGMA).
    \end{align*}
\end{itemize}

Detecting individuals in $\zerominus$ is easy by simply classifying all individuals in negative tests under $(\G^{(1)}, \hat\SIGMA^{(1)})$ as healthy. Similarly, it might be within reach to find the set $\oneminus$ by a one-stage algorithm. However, finding individuals in $\zeroplus$ and $\oneplus$ seems a daunting task from the outset since for both of these types we could flip the infection status and obtain an alternative configuration $\tau$ that still leads to the same test results. For this reason, the individuals in $\oneplus$ and $\zeroplus$ are commonly called \textit{disguised} \cite{Aldridge_2018}. It is a highly non-trivial insight that under a specific test-design $\G^{(1)}$ the tuple $(\G^{(1)}, \hat\SIGMA^{(1)})$ contains information that allows us to distinguish between disguised healthy individuals and infected individuals \cite{Coja_2019_SPIV}. 
Specifically, every test in the neighborhood of $x \in \zeroplus$ needs to feature at least one infected individual to render such a test positive, while this is not the case for truly infected individuals. Here, every adjacent test will not contain any other infected individual with probability $1/2$ under a suitable choice of $\G^{(1)}$. We will call such tests which do not contain any further infected individual outside the infected individual under consideration \textit{unexplained}.
The \SPIV~ algorithm exploits this property. On a high level, it proceeds as follows. For the detailed technical statement and proof, we refer the reader to \cite{Coja_2019_SPIV}.

\begin{enumerate}
    \item Infer an estimate $\tau$ of $\SIGMA$ such that $\abs{\abs{ \tau - \SIGMA }}_1 = k n^{-\Omega(1)}$.
    \item Correct the few misclassifications in $\tau$.
\end{enumerate}

\noindent We find the following performance guarantee for \SPIV. Let
\begin{align*}
   \mSC = \mSC(n, k) = \max \left\{ \frac{1 - \theta}{\log 2}, \frac{\theta}{\log^2 2}\right\} n^{\theta} \log(n). 
\end{align*}

\begin{proposition}[Theorem 1.1 of \cite{Coja_2019_SPIV}] \label{Prop_SPIV_tests}Let $k \sim n^\theta$ and $\eps > 0$. The estimate $\tau$ returned by the \SPIV~algorithm will be equal to $\SIGMA$ \whp~ when $m = (1 + \eps) \mSC$.
\end{proposition}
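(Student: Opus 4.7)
The plan is to prove the proposition by exhibiting a spatially-coupled pooling design together with a two-phase decoder, and then analysing each phase separately. Phase one produces an estimate $\tau$ with $\|\tau-\SIGMA\|_1 = k\, n^{-\Omega(1)}$ via a local ``unexplained-tests'' score propagated along the coupling structure; phase two performs a combinatorial clean-up of the residual errors using no additional tests.

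For the design, I would partition the $n$ individuals into $\ell=\Theta(\log n)$ compartments arranged on a cycle, and let each test draw its participants uniformly from a window of $s$ consecutive compartments; a short seed region receives a marginally higher test density than the bulk. With $m=(1+\eps)\mSC$ tests, I would tune the per-test size so that each test is positive with probability about $1/2$, which maximises the per-test information and keeps the local factor graph approximately a sparse tree. The key quantity is then the number $\Gamma_x$ of \emph{unexplained} positive tests at $x$, i.e.\ tests $a\in\partial x$ in which $x$ is the only infected individual: for $\SIGMA_x=1$ each adjacent test is unexplained with probability close to $1/2$, so $\Gamma_x$ concentrates well above a threshold depending on $\eps$, whereas for $x\in\zeroplus$ every adjacent positive test must already contain another infected individual and so $\Gamma_x=0$. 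One can then process compartments sequentially starting in the seed: a direct first-moment computation handles the seed exactly, and then a wave-propagation argument---using the fact that correctly decoded compartments can be subtracted from future test scores---shows that thresholding $\Gamma_x$ classifies all but $kn^{-\Omega(1)}$ individuals in each subsequent compartment. The main obstacle sits here: the per-compartment error must not cascade as the wave travels around the cycle, so one needs a uniform concentration bound over all $\ell$ compartments that is tight enough to recover the precise constants $1/\log 2$ and $1/\log^2 2$ appearing in $\mSC$. This is exactly where the window width $s$, the seed density and $\eps$ have to be balanced against one another.

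For the second phase, I would run a peeling-style correction on $(\G^{(1)},\hat\SIGMA^{(1)})$ using $\tau$: iteratively identify positive tests whose positivity can only be explained by a single individual flagged healthy by $\tau$ (and flip that individual to infected), and negative tests that nonetheless contain flagged-infected participants (and flip those to healthy). Because $\|\tau-\SIGMA\|_1 = kn^{-\Omega(1)}$ is sub-polynomial while the typical individual still participates in $\Theta(\log n)$ tests, a short union bound over the locally tree-like factor graph shows that with high probability every initially misclassified vertex is touched by at least one test that unambiguously determines its status, so the peeling terminates after $O(\log n)$ rounds with $\tau=\SIGMA$, which is the claim.
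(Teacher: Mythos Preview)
The paper does not prove this proposition at all: it is quoted verbatim as Theorem~1.1 of \cite{Coja_2019_SPIV} and used as a black box, so there is no ``paper's own proof'' to compare against. What you have written is a reasonable high-level sketch of the argument in \cite{Coja_2019_SPIV}---the spatially-coupled design with a seed, the unexplained-tests score, the wave-propagation decoding, and the combinatorial clean-up are indeed the main ingredients there---but the present paper simply imports the result and its accompanying \Lem~\ref{fact_two_bounds_spiv} without reproving anything.

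If your goal is to supply a self-contained proof where the paper defers to a citation, be aware that your sketch glosses over exactly the point you yourself flag as the obstacle: controlling the error accumulation along the cycle so as to recover the sharp constants $1/\log 2$ and $1/\log^2 2$. In \cite{Coja_2019_SPIV} this is not handled by a single uniform concentration bound but by a rather delicate optimisation of weighted scores $\sum_j w_j$ over the window together with a large-deviations analysis; getting the constants right is the bulk of that paper. Your clean-up description is also slightly off: the actual phase two in \cite{Coja_2019_SPIV} is not a pure peeling on $(\G^{(1)},\hat\SIGMA^{(1)})$ but relies on an expansion property of the design ensuring that every small set of misclassified individuals contains one with many distinguishing tests, and the argument that this terminates correctly needs the $kn^{-\Omega(1)}$ bound from phase one in an essential way. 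So while your outline is in the right spirit, it is not yet a proof, and in any case the paper under review neither attempts nor needs one.
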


\noindent We will refer to step one of \SPIV~  as the \textit{estimation} phase and to the second step as the \textit{clean-up} phase. The following result from \cite{Coja_2019_SPIV} holds the key to our two-stage algorithm.

\begin{lemma}[Corollary 4.16 in \cite{Coja_2019_SPIV}] \label{fact_two_bounds_spiv}
Let $(\G, \SIGMA)$ be a group testing instance with $k \sim n^\theta$. Furthermore, let $\eps > 0$. If one applies \SPIV~ with $m$ tests, the estimation phase of \SPIV~ leads to an estimation $\tau$ of $\SIGMA$ such that $\abs{\abs{ \tau - \SIGMA }}_1 = k n^{-\Omega(1)}$ \whp, whenever $$m \geq (1 + \eps) (1 - \theta) \log^{-1}(2) n^\theta \log n.$$
\end{lemma}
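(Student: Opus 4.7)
The plan is to analyse the estimation phase of $\SPIV$ on the spatially coupled pool design put forward in \cite{Coja_2019_SPIV}. That design partitions the $n$ individuals into a small seed together with $\ell=\Theta(\log n)$ bulk compartments $C_1,\dots,C_\ell$ arranged along a ring, and each individual in a bulk compartment is placed into $\Delta=\Theta(\log n)$ tests drawn uniformly from a short window of test-compartments around its own. Under the assumption $m\ge (1+\eps)(1-\theta)\log^{-1}(2)\,n^\theta\log n$, the average test size $\Gamma$ satisfies $k\Gamma/n=(1+O(\eps))\log 2$, which is precisely the scaling at which a typical infected individual appears in a constant fraction of \emph{unexplained} tests, that is, tests containing no other infected individual.

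I would execute the argument in two parts. First, within the seed, which by construction receives a slightly super-information-theoretic density of tests, exact recovery of the restriction of $\SIGMA$ to the seed individuals holds w.h.p.\ by a union bound over alternative sub-configurations of Hamming weight $\sim k\cdot |{\rm seed}|/n$. Second, starting from these correctly classified seed individuals, I would propagate classifications iteratively through the bulk compartments $C_1,C_2,\dots,C_\ell$: at step $i$, any $x\in C_i$ appearing in a negative test is declared healthy, and any remaining $x\in C_i$ possessing a test whose other participants have all been declared healthy in a previous step is declared infected. The window structure of the design guarantees that by step $i$ a $1-o(1)$ fraction of the non-$x$ neighbours of the tests relevant for $C_i$ have already been classified in earlier iterations.

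The main obstacle is controlling the density evolution governing this propagation, namely showing that the probability that a disguised healthy $x\in C_i\cap\zeroplus$ is mistakenly retained as infected at step $i$ decays sufficiently fast. The key calculation, under $k\Gamma/n\sim\log 2$, is to bound the probability that every one of the $\Delta$ tests of such an $x$ contains at least one additional individual not yet labelled healthy; once almost all truly infected individuals in the preceding window have been correctly identified, a Poisson approximation for the number of infected neighbours of a random test gives this probability as $2^{-(1-o(1))\Delta}=n^{-\Omega(1)}$. Turning this per-individual bound into a high-probability statement uses Bernstein-type concentration for the infection count per test, together with a Chernoff bound over the residual misclassifications in each compartment.

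Finally, a union bound over the $\Theta(\log n)$ compartments, whose logarithmic cost is absorbed into the $n^{-\Omega(1)}$ slack in the exponent, yields $\norm{\tau-\SIGMA}_1 = kn^{-\Omega(1)}$ w.h.p. The extra tests of order $\theta\log^{-2}(2)\,n^\theta\log n$ needed to reach $\mSC$ in \Prop~\ref{Prop_SPIV_tests} are not necessary here, as the clean-up phase of $\SPIV$ is not involved in the weak-recovery statement of the lemma.
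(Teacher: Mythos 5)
Note first that the paper itself does not prove this statement at all: it is imported verbatim as Corollary~4.16 of \cite{Coja_2019_SPIV}, so the only ``proof'' in the paper is the citation, and your proposal has to be measured against the actual \SPIV~analysis in that reference. There it diverges in a way that breaks. In the spatially coupled design, an individual in compartment $C_i$ has its $\Delta$ tests spread over the window of test compartments $i,\dots,i+s-1$; when $C_i$ is processed, a test in the $j$-th window slot still contains \emph{unclassified} individuals from compartments $i+1,\dots,i+j$, so only a $(s-j)/s$ fraction of its members has been classified --- about half on average over the window, not the $1-o(1)$ fraction you assert. Consequently your inference rule (``declare infected when some test has all other participants already declared healthy'') can essentially only fire on the $\approx\Delta/s$ tests in the rearmost slot, and for a disguised healthy individual $x\in\zeroplus$ the event that every test of $x$ contains someone not yet labelled healthy holds with probability $1-o(1)$ --- every forward-window test trivially contains unclassified individuals --- not $2^{-(1-o(1))\Delta}$. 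This is precisely why \cite{Coja_2019_SPIV} does not use a {\tt DD}-type existence rule in the estimation phase, but a \emph{thresholded weighted count} of currently unexplained positive tests, with window-dependent weights $w_1,\dots,w_{s-1}$ obtained from a large-deviations optimization; those weights are the ingredient that moves the achievable constant from the {\tt DD}-type $\log^{-2}2$ regime down to $(1-\theta)\log^{-1}2$, and your sketch omits them entirely.

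The gap is also visible quantitatively, even granting your per-individual bound. With $m=(1+\eps)(1-\theta)\log^{-1}(2)\,k\log n$ and test degree $\Gamma\sim n\log 2/k$ one gets $\Delta=(1+\eps)(1-\theta)\log n$, hence $2^{-\Delta}=n^{-(1+\eps)(1-\theta)\log 2}$; summed over the $n-k$ healthy individuals this yields an expected $n^{1-(1+\eps)(1-\theta)\log 2}$ misclassifications, which exceeds $k=n^\theta$ by a polynomial factor for small $\eps$, since $\log 2<1$. To conclude $\norm{\tau-\SIGMA}_1=kn^{-\Omega(1)}$ one needs the misclassification probability of a \emph{healthy} individual to be $n^{-(1-\theta)-\Omega(1)}$, strictly smaller than $2^{-\Delta}$ at this test budget; extracting that exponent, asymmetrically (the bound for infected individuals may be far weaker, as there are only $k$ of them), is exactly what the weighted statistic and its optimization accomplish in the proof of Corollary~4.16. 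Finally, your appeal to Chernoff bounds for ``residual misclassifications'' does not address error propagation under your rule: each missed infected individual makes positive tests look wrongly unexplained for its $\Gamma-1$ test-mates, risking an avalanche across compartments, which \cite{Coja_2019_SPIV} forestalls by thresholding with slack below the infected mean rather than demanding exact explanation. As written, the proposed argument would therefore fail to establish the lemma.
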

Therefore, by conducting $m \geq (1 + \eps) \minf$ tests, we can perform the first step of \SPIV. We are now in position to state our two-stage algorithm \aspiv. \\

\subsection*{Stage 1: estimation phase of \SPIV}
As a first step, we apply the estimation phase of \SPIV. By \Lem~\ref{fact_two_bounds_spiv}, employing $m = (1 + \eps) \minf$ tests, we obtain an estimate $\tau$ for which we have \whp 
\begin{align*}
    \abs{\abs{ \tau - \SIGMA }}_1 = k n^{-\Omega(1)}.
\end{align*}

\IncMargin{1em}
\begin{algorithm}[H]
 \KwData{$\hat\SIGMA$, $n, \theta$, $\eps > 0$}
 \KwResult{$\tau$ as an estimate of $\SIGMA$}
   Apply the estimation phase of \SPIV~ with $m^{(1)} = (1 + \eps) \minf$ tests;
\end{algorithm}
\DecMargin{1em}

\noindent Thus, we are left with an estimate of probably infected and probably healthy individuals. Formally,
\begin{align*}
    V_{1}(\tau) = \cbc{x \in V: \tau_x = 1} \qquad \text{and} \qquad V_{0}(\tau) = \cbc{x \in V: \tau_x = 0}.
\end{align*}
At this point, the key insight is, that on the one hand we find very few infected individuals in $\zero(\tau)$ while on the other hand the set $V_{1}(\tau)$ is tiny in comparison to $V_0(\tau)$. We proceed by handling these two sets differently.

\subsection*{Stage 2a: Individual testing for $V_{1}(\tau)$}
We readily obtain an upper-bound on the size of $V_{1}(\tau)$.
\begin{corollary} \label{cor_1}
\Whp,
$$ \abs{V_{1}(\tau)} \leq k + kn^{-\Omega(1)}$$
\end{corollary}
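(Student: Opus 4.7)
The plan is to derive the bound on $|V_1(\tau)|$ directly from the estimation guarantee in \Lem~\ref{fact_two_bounds_spiv}. Since stage 1 invokes the estimation phase of \SPIV~with $m^{(1)} = (1+\eps)\minf$ tests, that lemma applies and guarantees
\begin{align*}
    \|\tau - \SIGMA\|_1 = k n^{-\Omega(1)} \qquad \text{w.h.p.}
\end{align*}
Since both $\tau$ and $\SIGMA$ take values in $\{0,1\}^n$, the $\ell_1$-distance coincides with the Hamming distance, i.e.\ the number of coordinates where the two configurations disagree.

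First I would split the discrepancy into the two natural sources of error: the set $F^+ = \{x : \tau_x = 1, \SIGMA_x = 0\}$ of \emph{false positives} and the set $F^- = \{x : \tau_x = 0, \SIGMA_x = 1\}$ of \emph{false negatives}. Then $|F^+| + |F^-| = \|\tau - \SIGMA\|_1$, and in particular $|F^+| \leq k n^{-\Omega(1)}$ w.h.p.

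Next I would express $V_1(\tau)$ as the disjoint union of the individuals that are correctly classified as infected and the false positives, which yields
\begin{align*}
    |V_1(\tau)| = |V_1(\tau) \cap V_1(\SIGMA)| + |F^+| \leq |V_1(\SIGMA)| + |F^+| \leq k + k n^{-\Omega(1)},
\end{align*}
using that $\SIGMA$ has Hamming weight exactly $k$ by the combinatorial prior. Since this holds on the w.h.p.\ event supplied by \Lem~\ref{fact_two_bounds_spiv}, the corollary follows.

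There is no serious obstacle here: the statement is essentially a repackaging of the estimation guarantee via the triangle inequality, combined with the fact that $\SIGMA$ has fixed Hamming weight $k$. The only point worth being careful about is that \Lem~\ref{fact_two_bounds_spiv} is stated with a ``high probability'' qualifier, so the conclusion of \Cor~\ref{cor_1} inherits exactly the same w.h.p.\ guarantee and no union bound is needed at this step.
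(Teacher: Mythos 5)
Your argument is correct and is precisely the reasoning the paper intends: the paper simply states that \Cor~\ref{cor_1} ``follows directly from \Lem~\ref{fact_two_bounds_spiv},'' and your decomposition of $V_1(\tau)$ into correct positives and false positives, bounding the latter by the Hamming distance $kn^{-\Omega(1)}$ and the former by $|\one(\SIGMA)|=k$, is exactly the one-line argument being elided. No discrepancy with the paper's approach.
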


The corollary follows directly from \Lem~\ref{fact_two_bounds_spiv}. On this set of probably infected individuals, we perform individual tests needing $m^{(2)}_1=o(\minf)$ tests \whp

\IncMargin{1em}
\begin{algorithm}[H]
\setcounter{AlgoLine}{1}
   Test all individuals in $V_{1}(\tau)$ individually;
\end{algorithm}
\DecMargin{1em}
\subsection*{Stage 2b: Apply the {\tt COMP} algorithm to $V_{0}(\tau)$}
Let us write $$n'=\abs{V_{0}(\tau)} \leq n \qquad \text{and} \qquad k'= \abs{ \cbc{x \in \zero(\tau): \SIGMA_x = 1}}.$$
By \Lem~\ref{fact_two_bounds_spiv} we immediately find
\begin{corollary}\label{cor_2}
\Whp, we have $k' \leq k/\log n$.
\end{corollary}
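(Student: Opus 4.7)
The plan is to deduce Corollary~\ref{cor_2} directly from \Lem~\ref{fact_two_bounds_spiv}, essentially observing that the quantity $k'$ is dominated by the $\ell_1$-distance between $\tau$ and $\SIGMA$.

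First, I would observe that
\begin{align*}
    k' = \abs{\cbc{x \in \zero(\tau) : \SIGMA_x = 1}} = \abs{\cbc{x : \tau_x = 0,\ \SIGMA_x = 1}} \leq \sum_{x} \abs{\tau_x - \SIGMA_x} = \norm{\tau - \SIGMA}_1,
\end{align*}
since every individual contributing to $k'$ is an $x$ for which $\tau_x \neq \SIGMA_x$, which contributes $1$ to the $\ell_1$ distance. Thus $k'$ is bounded above by the total number of misclassifications produced by the estimation phase.

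Next, I would invoke \Lem~\ref{fact_two_bounds_spiv}. Since we run the estimation phase of \SPIV{} with $m^{(1)} = (1+\eps)\minf = (1+\eps)(1-\theta)\log^{-1}(2)\, n^\theta \log n$ tests, the lemma guarantees that \whp
\begin{align*}
    \norm{\tau - \SIGMA}_1 = k\, n^{-\Omega(1)}.
\end{align*}
Combining with the previous display yields $k' \leq k\, n^{-\Omega(1)}$ \whp

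Finally, I would unpack the $\Omega(1)$ notation: there is a constant $c > 0$ (independent of $n$) such that $k' \leq k n^{-c}$ \whp{} Since $n^{-c} = o(1/\log n)$ as $n \to \infty$, we conclude $k' \leq k/\log n$ \whp, as claimed. No step here is difficult; the entire content of the corollary is a restatement of \Lem~\ref{fact_two_bounds_spiv} together with the trivial bound $k' \leq \norm{\tau - \SIGMA}_1$ and the observation that a polynomial decay beats $1/\log n$.
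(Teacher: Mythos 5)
Your proof is correct and is precisely the argument the paper intends: the paper derives Corollary~\ref{cor_2} immediately from \Lem~\ref{fact_two_bounds_spiv}, using exactly your chain $k' \leq \norm{\tau - \SIGMA}_1 = k\,n^{-\Omega(1)} \leq k/\log n$ \whp{} Your explicit unpacking of the trivial bound and of the fact that polynomial decay dominates $1/\log n$ simply makes visible what the paper leaves implicit.
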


$V_{0}(\tau)$ can be regarded as a new group testing instance with $n'$ individuals and $k'$ infected individuals.
Thus, we require a one-stage algorithm that gives a performance guarantee for a group testing instance for which we have an upper bound on the number of infected individuals. The simplest approach, called {\tt COMP}, is well understood, i.e., Aldridge et al. \cite{Aldridge_2016} give a pooling scheme $\G'$, under which one can infer the infection status of all individuals \whp~ by declaring all individuals that appear in negative tests as healthy and declaring all other individuals as infected. Clearly, this leads to the correct classification, iff $\zeroplus(\G',\SIGMA_{\mid V_{0}(\tau)}) = \emptyset.$ The performance guarantee reads
\begin{lemma}[Theorem 2 of \cite{Aldridge_2016}]\label{lem_perf_comb}
Let $k' = O(n'^{\theta'})$ and $\eps > 0$. There is a pooling scheme $\G'$ and a polynomial time algorithm {\tt COMP} that infers $\SIGMA'$ correctly from $\G', \hat\SIGMA'$ as long as
\begin{align*}
    m' > (1+\eps) \frac{1}{\log^2 2} k' \log n'.
\end{align*}
\end{lemma}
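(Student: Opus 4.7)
The plan is to analyze the COMP algorithm paired with a constant-column random test design. Recall that COMP marks every individual appearing in a negative test as healthy and every remaining individual as infected. Since an infected individual appears only in positive tests, COMP outputs $\SIGMA'$ correctly if and only if every truly healthy individual appears in at least one negative test, that is $\zeroplus(\G', \SIGMA') = \emptyset$.

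For the test design I use the constant-column scheme: independently for each individual $z$, place $z$ into a set $T_z$ of exactly $\Delta := \lceil m' \log 2 / k' \rceil$ tests drawn uniformly at random from the collection of $\Delta$-subsets of $[m']$. With this choice, the probability that a fixed test avoids a given infected individual is $1 - \Delta/m'$, so the probability that a fixed test is negative equals $(1 - \Delta/m')^{k'} \sim \eul^{-\Delta k'/m'} = 1/2$. This value of $\Delta$ optimally balances test participation against disguising probability, and it is this choice that ultimately yields the $1/\log^2 2$ constant rather than the weaker $2/\log 2$ or $\eul$ obtainable from a Bernoulli design.

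Fix a healthy individual $x$. Then $x$ is disguised exactly when $T_x \subseteq U$, where $U := \bigcup_{y \,:\, \SIGMA'_y = 1} T_y$, because every test in $T_x$ must contain some infected neighbour in order to read positive. Since $T_x$ is drawn independently of the collection $\{T_y\}_{y \,:\, \SIGMA'_y = 1}$, conditioning on $|U|$ gives
\begin{align*}
\Pr\bc{T_x \subseteq U \mid |U|=s} \;=\; \binom{s}{\Delta}\binom{m'}{\Delta}^{-1} \;\leq\; \bc{s/m'}^{\Delta}.
\end{align*}
A first moment computation yields $\Erw|U| \sim m'/2$, and Azuma--Hoeffding applied to the Doob martingale exposing the $T_y$ one at a time (each step changing $|U|$ by at most $\Delta$) shows $|U|/m' = 1/2 + o(1)$ outside an event of superpolynomially small probability. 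Consequently $\Pr\bc{x \text{ disguised}} \leq (1/2 + o(1))^{\Delta} = \exp\bc{-(1-o(1)) m' \log^2 2 / k'}$.

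A union bound over the at most $n'$ healthy individuals then gives
\begin{align*}
\Pr\bc{\zeroplus(\G', \SIGMA') \neq \emptyset} \;\leq\; n' \exp\bc{-(1-o(1)) m' \log^2 2/k'},
\end{align*}
which is $o(1)$ whenever $m' \geq (1+\eps) k' \log n' / \log^2 2$. Since COMP visits each test--individual incidence once, it runs in polynomial time. The main technical obstacle I anticipate is the lack of independence among the sets $\{T_y\}$---they are not independent since each infected individual samples its tests without replacement---but the Azuma--Hoeffding step above (or, alternatively, a reduction to an i.i.d.\ Bernoulli$(\log(2)/k')$ design with a standard pruning of over- and under-populated individuals) handles this cleanly and turns the heuristic into a rigorous bound.
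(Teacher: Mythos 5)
The paper itself offers no proof of this lemma: it is imported as a black box (Theorem 2 of the cited Aldridge--Johnson--Scarlett paper), so there is no internal argument to compare against. Your proposal supplies a self-contained proof, and it is essentially the standard constant-column-weight analysis from that line of work: the success criterion $\zeroplus(\G',\SIGMA')=\emptyset$ is the right one, the hypergeometric bound $\Pr\bc{T_x\subseteq U\mid |U|=s}=\binom{s}{\Delta}\binom{m'}{\Delta}^{-1}\le (s/m')^{\Delta}$ is valid because the healthy individual's set $T_x$ is independent of the infected individuals' sets, the computation $\Erw|U|\sim m'/2$ at $\Delta=\lceil m'\log 2/k'\rceil$ is correct, and the union bound delivers exactly the constant $1/\log^2 2$ --- which, as you note, a Bernoulli design cannot achieve. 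In the regime in which the paper actually invokes the lemma ($k'\le k/\log n$ with $k\sim n^{\theta}$, hence $k'$ growing polynomially, or smaller $k'$ by monotonicity of the disguising event in the infected set), your argument is sound.

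Two caveats. First, your closing worry about dependence among the sets $\{T_y\}$ is misdiagnosed: distinct individuals choose their test sets \emph{mutually independently} in this design; the without-replacement dependence lives only inside a single $T_y$. That is precisely why your Doob martingale exposing the $T_y$ one at a time, with differences bounded by $\Delta$, is already fully rigorous --- no pruning or reduction to a Bernoulli design is needed (and the latter would degrade the constant). Second, both the claim $\Erw|U|\sim m'/2$ and the Azuma exponent $\exp\bc{-\Theta(\delta^2 k')}$ genuinely require $k'\to\infty$ at polynomial speed; under the literal hypothesis $k'=O(n'^{\theta'})$ the parameter $k'$ may be bounded, in which case $\Delta=\Theta(m')$, the covered fraction $|U|/m'$ concentrates near $1-(1-\log 2/k')^{k'}>1/2$, and the crude bound $(s/m')^{\Delta}$ is too lossy: for $k'=1$ one gets only $(\log 2)^{m'\log 2}\approx (n')^{-0.53(1+\eps)}$, which does not survive the union bound over $n'$ healthy individuals at the stated $m'$. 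The fix is to keep the exact ratio $\binom{s}{\Delta}\binom{m'}{\Delta}^{-1}$, which is exponentially smaller than $(s/m')^{\Delta}$ when $\Delta=\Theta(m')$ and does close the bound. This is an edge case of the lemma's phrasing rather than of its use in the paper, but a complete proof should address it.
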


\noindent Therefore, \aspiv~ continues as follows.\\ 
\IncMargin{1em}
\begin{algorithm}[H]
\setcounter{AlgoLine}{2}
   Apply the {\tt COMP} algorithm to the set $\zero(\tau)$ with $m_0^{(2)} = (1 + \eps) \frac{1}{\log^2 2}  k$ tests;
\end{algorithm}

\noindent The following lemma evinces that we require $m^{(2)}_0=o(m_1)$ additional tests in step 2b of \aspiv.

\begin{lemma}\label{Lem_2b}
Stage 2b of \aspiv~requires $m^{(2)}_0 = o(m_1)$ tests conducted in parallel to recover $\SIGMA_x$ for every $x \in V_{0}(\tau)$ \whp
\end{lemma}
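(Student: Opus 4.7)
The plan is to regard the set $V_{0}(\tau)$ together with its restricted ground truth $\SIGMA_{\mid V_{0}(\tau)}$ as a fresh (combinatorial) group testing sub-instance with $n' = |V_{0}(\tau)|$ individuals and $k' = |\{x \in V_{0}(\tau) : \SIGMA_x = 1\}|$ infected ones, and then to invoke the \texttt{COMP} performance guarantee of \Lem~\ref{lem_perf_comb} on this sub-instance. Since the parameters $(n',k')$ are measurable with respect to the outcome of Stage 1, the pool design $\G'$ required by \Lem~\ref{lem_perf_comb} can be constructed from the Stage 1 output and executed in parallel as the second stage; this is precisely what Stage 2b of \aspiv{} does.

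The first step is to verify that the hypotheses of \Lem~\ref{lem_perf_comb} are met w.h.p. Corollary \ref{cor_1} yields $|V_{1}(\tau)| \le k + k n^{-\Omega(1)} = o(n)$, hence $n' = n - |V_{1}(\tau)| \sim n$. In particular, $k' \le k = n^{\theta} \le (n')^{\theta'}$ for any $\theta' \in (\theta,1)$ once $n$ is large, so the sublinearity assumption $k' = O((n')^{\theta'})$ holds. Second, Corollary \ref{cor_2} gives $k' \le k/\log n$ w.h.p., so
\[
(1+\eps)\frac{k' \log n'}{\log^2 2} \;\le\; (1+\eps)\frac{k \log n}{\log^2 2 \cdot \log n} \;=\; (1+\eps)\frac{k}{\log^2 2} \;=\; m_0^{(2)}.
\]
Therefore \Lem~\ref{lem_perf_comb} applies and the \texttt{COMP} algorithm, run on $\G'$ with $m_0^{(2)}$ tests, correctly recovers $\SIGMA_x$ for every $x \in V_{0}(\tau)$ w.h.p.

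It remains to compare $m_0^{(2)}$ with $m_1 = (1+\eps) \minf = (1+\eps) \frac{1-\theta}{\log 2} n^\theta \log n$. Since $k \sim n^\theta$, we obtain
\[
\frac{m_0^{(2)}}{m_1} \;=\; \frac{(1+\eps) k / \log^2 2}{(1+\eps)(1-\theta)(\log 2)^{-1} n^\theta \log n} \;=\; \Theta\!\left(\frac{1}{\log n}\right) \;=\; o(1),
\]
so $m_0^{(2)} = o(m_1)$ as claimed.

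The only point requiring slight care is that the pool design $\G'$ of \Lem~\ref{lem_perf_comb} is calibrated to a target value of $k'$, while Stage 1 only supplies the high-probability upper bound $k/\log n$ from Corollary \ref{cor_2}. This is not an obstacle: one builds $\G'$ for the worst-case parameters $(n, \lceil k/\log n \rceil)$; on the w.h.p.\ event that $k' \le k/\log n$, the bound above still gives $m_0^{(2)} \ge (1+\eps)\log^{-2}(2)\, k' \log n'$, which is all that \Lem~\ref{lem_perf_comb} needs. Combining the failure probabilities of Stage 1 (Lemma \ref{fact_two_bounds_spiv}) and of \texttt{COMP} via a union bound completes the argument.
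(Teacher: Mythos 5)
Your proposal is correct and follows essentially the same route as the paper's own (much terser) proof: invoke \Lem~\ref{lem_perf_comb} with the bounds $n' \leq n$ and $k' \leq k/\log n$ from Corollary \ref{cor_2}, conclude that $m_0^{(2)} \leq (1+\eps)\log^{-2}(2)\,k = o(\minf)$, and hence $o(m_1)$ since $k \sim n^\theta$. Your additional care --- checking the sublinearity hypothesis $k' = O((n')^{\theta'})$, calibrating $\G'$ to the worst-case parameter $\lceil k/\log n\rceil$ rather than the unknown $k'$, and taking a union bound over the Stage~1 and {\tt COMP} failure events --- fills in details the paper leaves implicit, but does not change the argument.
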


\begin{proof}
Since $n'\leq n$ and $k'\leq k/\log n$, \Lem~\ref{lem_perf_comb} guarantees that Stage 2b successfully recovers $\SIGMA_x$ for all $x \in V_0(\tau)$, as long as
\begin{align*}
    m_0^{(2)} = (1+ \eps) \frac{1}{\log^2 2} k' \log n' \leq (1 + \eps) \frac{1}{\log^2 2}  k = o\bc{\minf}.
\end{align*}
The lemma follows.
\end{proof}

\begin{proof}[Proof of Theorem \ref{Thm_aspiv}]
Observe that Stage 2a and Stage 2b can be performed in parallel such that \aspiv~ is indeed a two-stage algorithm. Moreover, by \Lem~\ref{fact_two_bounds_spiv}, Corollaries \ref{cor_1}, \ref{cor_2} and \Lem~\ref{Lem_2b} \aspiv~requires
\begin{align*}
    m = m^{(1)} + m^{(2)}_0 + m^{(2)}_1 \leq (1+ 3 \eps) \minf
\end{align*}
tests to recover $\SIGMA$ \whp~ The theorem follows.
\end{proof}

\section{Concluding Remarks} \label{sec_related_work}
For probabilistic sublinear group testing - the setting we consider here - M\'ezard \& Toninelli \cite{Mezard_2011} analysed a specific class of algorithms for which they find that when
\begin{align}
  m<(1-\eps) \frac{\minf}{\log 2} \label{eq_Mezard}  
\end{align}
reliable recovery of $\SIGMA$ is not possible under any two-stage test design. The kind of algorithms are such that they are able to identify $\zerominus$ and $\oneminus$ in the first stage and then perform individual testing on the sets $\zeroplus$ and $\oneplus$ in the second stage. The key argument of \cite{Mezard_2011} is that as long as \eqref{eq_Mezard} holds $\abs{\zeroplus}, \abs{\oneplus} = O(k)$ and thus the number of tests performed in the second stage of the algorithm are of lower order than $\minf$. Indeed, until the recent work of \cite{Coja_2019_SPIV}, no algorithm was known that could identify individuals $x \in \zeroplus$ when \eqref{eq_Mezard} holds. 
It was tempting to assume that no such algorithm exists. 

By definition, an individual $x \in \zeroplus$ only shows up in positive tests. Thus, one could easily find a second configuration $\tau$ by flipping this individual from healthy to infected and yield the same test results $\hat \SIGMA$. Without further information, there is no way to tell $\SIGMA$ apart from $\tau$ and the best option is to uniformly at random select one configuration. However, from a Bayesian perspective, this assumption is not true. To be precise, there is more information hidden in the graph which allows us to tell the sets $\zeroplus$ and $\one$ apart.
First, even though $\tau$ is a valid alternative configuration satisfying the test results, the prior probability for an individual to be infected informs us that $\SIGMA$ is far more likely than $\tau$. Second, when considering the neighbourhood of any individual $x \in \zeroplus$, one will find at least one infected individual rendering this test positive. Conversely, for $y \in \one$ the probability to find at least one other infected individual is $\sim 1/2$ under the (best possible) choice of test design where half of the tests are expected to be positive \cite{Aldridge_2016, Aldridge_2017, Coja_2019_SPIV}. Thus, the posterior distribution on configurations yielding the same test results is highly biased towards the correct configuration $\SIGMA$. This insight in conjunction with the notion of spatial coupling was used to derive an algorithm that already in one stage correctly classifies all but a polynomially small fraction of all individuals in $\zeroplus$ and $\one$ by employing $\minf$ tests \cite{Coja_2019_SPIV}. This property is the key which allows the \aspiv~ algorithm to achieve a bound beyond \eqref{eq_Mezard}. This finding thus improves on the work of \cite{Mezard_2011} and resolves open problems posed in \cite{Aldridge_2019, Berger_2002, Damaschke_2012, Scarlett_2018}. 

Most importantly, while it was shown in \cite{Coja_2019_SPIV} that there exists an adaptivity gap for one-stage test designs, our result evinces that already with two stages the universal information-theoretic lower bound is attainable by an efficient algorithm.


\subsection*{Acknowledgment}
We thank Amin Coja-Oghlan, Oliver Gebhard and Maurice Rolvien for fruitful discussions on the performance of the \SPIV~ algorithm.

\newpage 

\end{document}